\newtheorem{theorem}{Theorem}
\newtheorem{lemma}{Lemma}
\newtheorem{proposition}{Proposition}
\DeclareMathOperator{\tr}{Tr}
\DeclareMathOperator{\re}{Re}
\title{A Closed Form for Moment-Based Entanglement Tests Associated to the PPT Criterion}
\date{$^\textnormal{a}$Advanced Processing Branch, Naval Surface Warfare Center, Panama City Division\\
$^\textnormal{b}$Advanced Acoustics \& Seabed Warfare Branch, Naval Surface Warfare Center, Panama City Division}
\begin{document}
\author{Zachary P. Bradshaw$^\textnormal{a}$\footnote{Corresponding Author: zbradshaw@tulane.edu} and Margarite L. LaBorde$^\textnormal{b}$}

\maketitle
\begin{abstract}
    Neven et al. have explored an unexpected alliance between the mathematical insights of Sir Isaac Newton and Ren\'e Descartes which culminates in the reduction of the Positive Partial Transpose (PPT) criterion to an equivalent hierarchy of entanglement tests based on the moments of the partial transpose. By repurposing these classical results in the context of modern quantum theory, they illuminate new pathways for entanglement verification. Here, we expand on this work by providing a closed form for the inequalities defining these entanglement tests and producing an equivalent set of graph theoretic conditions on the weighted graph induced by the partial transpose.
\end{abstract}

\section{Introduction}\label{sec:intro}

In recent decades, entanglement has been understood as a resource for performing computations that may not be feasible on a classical computer---although it has been shown that entanglement alone is not sufficient to outperform our trusty classical devices (see the Gottesman–Knill theorem \cite{gottesman1998}). Still, for this reason and many others, researchers have sought easily computable necessary and sufficient conditions for a quantum state to be entangled \cite{DPS02,doherty04separability,doherty05multipartite}. Unfortunately, it was shown that this problem is NP-hard\footnote{A problem $H$ is NP-hard if for any other problem in NP, there is a polynomial time reduction from $L$ to $H$. Thus, a polynomial time solution to $H$ would imply P=NP.} \cite{gharibian2008}, suggesting that such conditions do not exist; however, there are a variety of necessary conditions which one may make use of, and the most widely known is perhaps the Positive Partial Transpose (PPT) criterion \cite{ppt1,ppt3,ppt2} due to Peres \cite{Per96} and the Horodecki family \cite{HORODECKI1996}. This criterion states that when a quantum state is separable, the partial transpose of its density matrix representation has only nonnegative eigenvalues.

Neven et al. have shown that the PPT criterion can essentially be split into several weaker criteria which are collectively equivalent to PPT \cite{neven2021}, and they do so by combining Newton's recursive identity for the elementary symmetric polynomials \cite{Newton} with Descartes' rule of sign \cite{bensimhoun2016} to establish several inequalities between the moments of the partial transpose which are more efficiently computed. A similar approach is taken in \cite{yu2021} using Hankel matrices. The connection to Newton's identities parallels that in \cite{bradshaw2022}, where a similar recursion formula was used for the related homogeneous symmetric polynomials to justify the monotonicity of the entanglement tests therein.

In this work, we give a closed form for the family of entanglement tests given by Neven et al. We do so by deriving a closed form for Newton's recursive identities using generating functions and Fa\`a di Bruno's formula. As a consequence, we also prove a formula appearing in a problem posed by T. Amdeberhan \cite{amdeberhan2022} regarding the determinant of a matrix in terms of its moments. This problem was recently addressed by Zhan and Huang \cite{zhan2025} using combinatorial methods. Additionally, we translate these entanglement tests into the graph zeta function picture introduced in \cite{bradshaw2023}, deriving an equivalent set of graph theoretic conditions on the weighted graph induced by the partial transpose of the matrix in question.

The remainder of this article is organized as follows. In Section~\ref{sec:newton}, we derive Newton's recursive identity for the elementary symmetric polynomials as well as Descartes' rule of sign. In Section~\ref{sec:weakppt}, we review the construction of the moment-based entanglement tests of Neven et al., and in Section~\ref{sec:closed-form}, we expand on their work by connecting these results to the complete exponential Bell polynomials and the closely related cycle index polynomials of the symmetric group using a generating function argument for the elementary symmetric polynomials. This connection provides a closed form for the inequalities used by Neven et al. We then show in Section~\ref{sec:implementation} that it does not suffice to check only the final inequality by producing an example of an entangled function for which the inequalities are satisfied after violating a previous step. In Section~\ref{sec:graphs}, we translate these tests to the theory of graph zeta functions and produce the equivalent graph theoretic conditions for the PPT criterion. Finally, in Section~\ref{sec:conclusion}, we give concluding remarks.

\section{Newton's Identities and Descartes' Rule of Sign}\label{sec:newton}
The $k$-th elementary symmetric polynomial in $n$ variables \cite{macdonald1995} is defined by
\begin{equation}
    e_k(x_1,\ldots,x_n) = \sum_{1\le j_1<\cdots<j_k\le n}x_{j_1}\cdots x_{j_k},
\end{equation}
with $e_k(x_1,\ldots,x_n) = 0$ when $k>n$. These polynomials generate the ring of symmetric polynomials consisting of all polynomials $p(x_1,\ldots,x_n)$ with the property that $p(x_{\sigma(1)},\ldots,x_{\sigma(n)})=p(x_1,\ldots,x_n)$ for every permutation $\sigma$ in the symmetric group $S_n$ on $n$ letters. That is, the ring of polynomials that are invariant under the natural action of $S_n$. These polynomials appear in a variety of contexts, but of particular importance to us is their appearance in Vieta's formula, which relates the roots of a monic polynomial with its coefficients. Explicitly, we have
\begin{equation}\label{eq:vieta}
    \prod_{j=1}^n(t-x_j) = \sum_{k=0}^n(-1)^ke_k(x_1,\ldots,x_n)t^{n-k},
\end{equation}
and from this identity, we can derive a recursive relationship between the elementary symmetric polynomials and the power sum symmetric polynomials defined by $p_k(x_1,\ldots,x_n) = \sum_{j=1}^nx_j^k$. To do so, we work in the field of fractions of the ring of formal power series with integer coefficients; although, Mead finds this approach unsatisfactory and produces an alternative derivation facilitated by a notational change in \cite{Newton}, which the reader may find useful.

\begin{lemma}[Newton's Identities]\label{lemma:newton}
    Let $n\ge k\ge1$. Then
    \begin{equation}
        ke_k(x_1,\ldots,x_n) = \sum_{j=1}^k(-1)^{j-1}e_{k-j}(x_1,\ldots,x_n)p_i(x_1,\ldots,x_n).
    \end{equation}
    Moreover, for $k>n\ge1$, we have
    \begin{equation}
        \sum_{j=k-n}^k(-1)^{j-1}e_{k-j}(x_1,\ldots,x_n)p_i(x_1,\ldots,x_n)=0
    \end{equation}
\end{lemma}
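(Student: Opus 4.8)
The plan is to encode the elementary symmetric polynomials in the generating function $E(t) = \prod_{j=1}^n(1+x_jt)$, which equals $\sum_{k=0}^n e_k(x_1,\ldots,x_n)t^k$ — this is immediate from the definition of $e_k$, and is just a rearrangement of \eqref{eq:vieta} after replacing $t$ by $1/t$, multiplying by $t^n$, and sending $x_j\mapsto -x_j$ — and then to read off both identities by comparing coefficients in a single differential relation for $E(t)$. As the excerpt suggests, I work in the field of fractions of $\mathbb{Z}[x_1,\ldots,x_n][[t]]$; this is legitimate because $E(0)=1$ and each factor $1+x_jt$ has constant term $1$, so these series are units and the inverses appearing below are well defined.

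First I would form the logarithmic derivative. From $E'(t) = \sum_{j=1}^n x_j\prod_{i\ne j}(1+x_it)$, dividing by $E(t)$ gives $E'(t)/E(t) = \sum_{j=1}^n \frac{x_j}{1+x_jt}$. Expanding each summand as a geometric series, $\frac{x_j}{1+x_jt} = \sum_{m\ge 0}(-1)^m x_j^{m+1}t^m$, and summing over $j$ using $p_m(x_1,\ldots,x_n) = \sum_j x_j^m$ yields $E'(t)/E(t) = \sum_{m\ge 1}(-1)^{m-1}p_m t^{m-1}$. Clearing the denominator produces the key relation
\begin{equation}\label{eq:keynewton}
 E'(t) = E(t)\sum_{m\ge 1}(-1)^{m-1}p_m(x_1,\ldots,x_n)\,t^{m-1}.
\end{equation}
Equivalently one may integrate term by term, using $\log(1+x_jt)=\sum_{m\ge 1}\frac{(-1)^{m-1}}{m}x_j^m t^m$ so that $\log E(t) = \sum_{m\ge 1}\frac{(-1)^{m-1}}{m}p_m t^m$; that route is the one that genuinely requires dividing by the integer $m$, hence the passage to the field of fractions.

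Finally I would compare the coefficient of $t^{k-1}$ on the two sides of \eqref{eq:keynewton}. On the left, $E'(t)=\sum_{k=1}^n k e_k t^{k-1}$ contributes $ke_k$ (and $0$ when $k>n$). On the right, writing $E(t)=\sum_{\ell=0}^n e_\ell t^\ell$ with the convention $e_\ell=0$ for $\ell<0$ or $\ell>n$, the Cauchy product gives $\sum_{m}(-1)^{m-1}e_{k-m}p_m$ as the coefficient of $t^{k-1}$. For $1\le k\le n$ the nonvanishing terms are precisely $m=1,\ldots,k$, which is the first asserted identity; for $k>n$ the left-hand coefficient vanishes while the surviving right-hand terms are exactly those with $k-n\le m\le k$, which is the second. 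The computation is otherwise routine; the only points demanding care are the justification of the formal manipulations (all the inverses exist because the relevant constant terms equal $1$) and the bookkeeping of the index ranges in the Cauchy product, since that is precisely what separates the cases $k\le n$ and $k>n$.
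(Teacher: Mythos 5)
Your proof is correct and follows essentially the same route as the paper: both differentiate the Vieta generating function $\prod_j(1\pm x_jt)=\sum_k(\pm1)^ke_kt^k$, expand the resulting rational terms as geometric series to produce the power-sum series, and read off both identities by comparing coefficients, with the $k\le n$ versus $k>n$ cases distinguished only by the index bookkeeping. The only differences are cosmetic --- your sign convention $1+x_jt$ in place of $1-x_jt$, and phrasing the key relation as a logarithmic derivative rather than multiplying the differentiated identity by $t$.
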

\begin{proof}
    Substituting $t\to1/t$ in Vieta's formula \eqref{eq:vieta} produces
    \begin{equation}
        \prod_{j=1}^n\left(\frac{1}{t}-x_j\right) = \sum_{k=0}^n(-1)^ke_k(x_1,\ldots,x_n)t^{k-n}
    \end{equation}
    and now multiplying by $t^n$ gives
    \begin{equation}\label{eq:vieta-sub}
        \prod_{j=1}^n(1-x_jt) = \sum_{k=0}^n(-1)^ke_k(x_1,\ldots,x_n)t^{k}.
    \end{equation}
    Differentiating with respect to $t$ produces
    \begin{equation}
        \sum_{k=1}^nke_k(x_1,\ldots,x_n)t^{k-1}=-\sum_{i=1}^nx_i\prod_{j\ne i}(1-x_jt),
    \end{equation}
    so that upon multiplying by $t$, rewriting the right hand side, and making use of the geometric series expansion, we are left with
    \begin{align}
        \sum_{k=1}^nke_k(x_1,\ldots,x_n)t^{k}&=-\sum_{i=1}^n\frac{x_it}{1-x_it}\prod_{j=1}^n(1-x_jt)\\
        &=-\sum_{i=1}^n\sum_{\ell=1}^\infty x_i^\ell t^\ell\prod_{j=1}^n(1-x_jt)\\
        &=\sum_{\ell=1}^\infty p_\ell(x_1,\ldots,x_n) t^\ell\sum_{j=0}^n(-1)^{j-1}e_j(x_1,\ldots,x_n)t^{j},
    \end{align}
    where in the last equality, we have made use of \eqref{eq:vieta-sub}. Comparing the coefficients of $t^k$ on either side completes the proof.
\end{proof}

This relationship appears throughout mathematics in areas such as Galois theory and combinatorics, and in a moment, we will further showcase its use in the detection of quantum entanglement. We now shift our attention to our next essential element, Descartes' rule of sign. This rule appears in Descartes' 1637 work \textit{La G\'eom\'etrie}, which is itself an appendix to his earlier work \textit{Discours de la m\'ethode} \cite{descartes2006discourse}, where he lays out his method for discerning truth in the sciences. By examining the changes in the signs of the coefficients of a polynomial, Descartes is able to give a bound for the number of positive roots which the polynomial may have. Explicitly, the rule of sign is as follows.

\begin{lemma}[Descartes' Rule of Sign] Let $p(x) = a_nx^n+\cdots+a_1x+a_0$ be a polynomial with real coefficients. The number of positive roots of $p$ is bounded above by the number of sign changes between consecutive coefficients.
\end{lemma}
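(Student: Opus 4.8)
The plan is to prove the bound $Z_+(p)\le V(p)$ by induction on $\deg p$, where I write $Z_+(p)$ for the number of positive real roots of $p$ counted with multiplicity and $V(p)$ for the number of sign changes in the coefficient sequence of $p$ (deleting zero coefficients before counting). The two ingredients are Rolle's theorem, which ties the positive roots of $p$ to those of $p'$, and the elementary observation that differentiation merely strips the constant term off the coefficient list up to positive scalars, so that $V(p')$ and $V(p)$ differ by at most one in a controlled way.

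First I would reduce to the case $a_0\neq 0$: if $x=0$ is a root of order $m$, factor $p(x)=x^m\tilde p(x)$ and note that replacing $p$ by $\tilde p$ changes neither $V$ nor $Z_+$. The base case $\deg p=0$ is immediate since both quantities vanish. For the inductive step, the coefficient list of $p'$ is $(na_n,(n-1)a_{n-1},\dots,1\cdot a_1)$, which has exactly the sign sequence of $(a_n,\dots,a_1)$; hence the sign sequence of $p$ is that of $p'$ with the single extra symbol $\operatorname{sign}(a_0)$ appended at the low-degree end. Consequently either $V(p')=V(p)-1$, when $\operatorname{sign}(a_0)$ disagrees with the sign of the lowest-degree nonzero coefficient of $p$ among degrees $\ge 1$, or $V(p')=V(p)$, when these agree.

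In the first case the standard Rolle count suffices: if $p$ has distinct positive roots $s_1<\dots<s_t$ with multiplicities $m_1,\dots,m_t$, then $p'$ vanishes to order $m_i-1$ at each $s_i$ and, by Rolle, at least once in each $(s_i,s_{i+1})$, so $Z_+(p')\ge Z_+(p)-1$; with the induction hypothesis $Z_+(p')\le V(p')$ this yields $Z_+(p)\le Z_+(p')+1\le V(p')+1=V(p)$. The second case carries the real content. If $p$ has no positive root there is nothing to prove, so let $s_1$ be its smallest positive root; then $p(0)=a_0$ and the lowest nonconstant term of $p$ near the origin is $a_\ell x^\ell$ with $\operatorname{sign}(a_\ell)=\operatorname{sign}(a_0)$, so for small $x>0$ the value $p(x)$ lies strictly beyond $a_0$ on the $a_0$-side, while $p(s_1)=0$. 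Hence $p$ has a local extremum in $(0,s_1)$, furnishing a root of $p'$ there that is distinct from all the roots counted above, so $Z_+(p')\ge Z_+(p)$ and $Z_+(p)\le Z_+(p')\le V(p')=V(p)$.

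I expect the main obstacle to be precisely this off-by-one refinement in the equal-sign case: the bare Rolle estimate gives only $Z_+(p)\le V(p)+1$, and recovering the missing unit requires noticing that the sign of the constant term controls the direction in which $p$ leaves the origin and thereby forces an extra critical point before the first positive root. Everything else---the reduction to $a_0\neq 0$, the coefficient-sign bookkeeping for $p'$, and the multiplicity count from Rolle's theorem---is routine. An alternative route that sidesteps the refinement is to peel positive roots off one at a time: show that multiplication by $(x-r)$ with $r>0$ strictly increases the number of sign changes, factor $p=q\prod_i(x-r_i)$ over its positive roots, and iterate; there the price is a somewhat delicate combinatorial lemma about how sign changes of a coefficient sequence behave under multiplication by a linear factor, so the difficulty is merely relocated rather than removed.
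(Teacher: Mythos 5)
Your proposal is correct. It shares the paper's overall skeleton (reduce to $a_0\neq 0$, induct on the degree, apply Rolle's theorem to relate the positive roots of $p$ and $p'$, and track how the sign changes of $p'$ compare to those of $p$), but it closes the crucial off-by-one gap by a genuinely different mechanism. The paper first proves a parity lemma --- the sign of $a_na_0$ forces the number of positive roots and the number of sign changes to have the same parity --- and then upgrades the Rolle estimate $r\le s+1$ to $r\le s$ by parity. You instead split into cases according to whether $\operatorname{sign}(a_0)$ agrees with the sign of the lowest nonzero coefficient of degree $\ge 1$: in the disagreeing case the bare Rolle count already gives the bound because differentiation removes one sign change, and in the agreeing case you recover the missing unit by observing that $p$ leaves $x=0$ on the far side of $a_0$, so a local extremum (hence a root of $p'$) must occur in $(0,s_1)$ before the first positive root, a point disjoint from all roots of $p'$ produced by Rolle and by multiplicity reduction at the $s_i$. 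Both arguments are sound; the paper's route yields as a by-product the stronger classical statement that $r$ and $s$ differ by an even number, while your route is self-contained within the induction, does not need the separate crossing/touching parity discussion, and handles the bookkeeping when $a_1=0$ more carefully than the paper does (the paper's dichotomy ``$a_0a_1>0$ or not'' silently assumes $a_1\neq0$, whereas you compare $a_0$ with the lowest-degree nonzero coefficient $a_\ell$, $\ell\ge1$). Your alternative sketch via multiplication by $(x-r)$ is also a standard valid path, though, as you note, it trades this analysis for a combinatorial lemma on sign changes under multiplication by a linear factor.
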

\begin{proof}
    Let $r$ denote the number of positive roots and let $s$ denote the number of sign changes. Note that if $a_0=0$, we can divide out a factor of $x$ without changing the number of \textit{positive} roots. We may therefore assume that $a_0\ne0$. Consider the quantity given by the product of the first and last coefficients. If $a_na_0>0$, then $r$ must be even. Indeed, $a_0$ determines the sign of $p$ at $x=0$ while $a_n$ determines the sign of $p$ as $x\to\infty$. Since $a_na_0>0$ implies that $a_n$ and $a_0$ share the same sign, it follows that $p$ crosses the positive $x$-axis an even number of times (each of which contributes an odd multiplicity). The polynomial $p$ is also allowed to touch the positive $x$-axis without crossing it, but each such zero contributes an even multiplicity to the overall count. It follows that $r$ must be even. Similarly, if $a_na_0<0$, then $r$ must be odd. Thus, $r$ and $s$ always have the same parity.

    When $n=0$ or $n=1$, it is clear that the number of positive roots is bounded above by the number of sign changes between consecutive coefficients (and that they differ by an even number). Let us proceed by induction on $n$. Suppose the lemma is true for some $n-1\ge2$. Taking the derivative of $p$ produces $p'(x)=a_nx^{n-1}+\cdots+a_1$, which is a polynomial of degree $n-1$. Then by the induction hypothesis, there is an integer $m\ge0$ such that $s'-r'=2m$, where we have denoted the number of positive roots of $p'$ by $r'$ and the number of sign changes by $s'$. By Rolle's theorem, the derivative $p'$ has a root between any two roots of the polynomial $p$. Moreover, any root of $p$ of multiplicity $k$ is also a root of $p'$ of multiplicity $k-1$, as can be seen by writing $p$ as a product of linear terms and differentiating using the product rule. It follows that $r'\ge r-1$.

    Now, since every exponent is positive, if $a_0a_1>0$, then $s'=s$. Otherwise, $s'=s-1$. Thus, we finally have $r\le r+1=s'-2m+1\le s-2m+1\le s+1$. But $r$ and $s$ share the same parity, so we actually have $r\le s$, and this completes the proof.
\end{proof}

\section{Entanglement Criteria}\label{sec:weakppt}

Before reviewing the entanglement criteria of Neven et al., let us recall the definition of the partial transpose operation. Given a composite quantum system described by the density operator $\rho$, we can construct the partial transpose of $\rho$ in the following way. Since the system is composite, the Hilbert space decomposes into a tensor product, say $\mathcal{H}_A\otimes\mathcal{H}_B$, and the density operator, which acts on this space, can therefore be written as 
\begin{equation}
    \rho = \sum_{i,j,k,l}\rho^{ik}_{jl}\ket{i}\!\!\bra{j}\otimes\ket{k}\!\!\bra{l},
\end{equation}
where the vectors $\{\ket{i}\otimes\ket{k}\}$ form a basis for the composite space and the $c_{ijkl}$ are complex coefficients. The partial transpose of $\rho$ with respect to the subsystem $B$ is defined by
\begin{equation}
    \rho^{T_B}= \sum_{i,j,k,l}\rho^{ik}_{jl}\ket{i}\!\!\bra{j}\otimes\ket{l}\!\!\bra{k}=\sum_{i,j,k,l}\rho^{il}_{jk}\ket{i}\!\!\bra{j}\otimes\ket{k}\!\!\bra{l},
\end{equation}
and the partial transpose with respect to subsystem $A$ is defined similarly. 

A well established method to test for entanglement in a quantum state is the so-called PPT criterion due to Peres \cite{Per96} and the Horodecki family \cite{HORODECKI1996}, which simply states that the partial transpose of $\rho$ with respect to any subsystem is positive-semidefinite whenever $\rho$ is separable. Thus, the existence of a negative eigenvalue of some partial transpose implies that $\rho$ is entangled. This condition is in general only necessary, though it is also sufficient for the special cases $\dim(\mathcal{H}_A)=2=\dim(\mathcal{H}_B)$ and $\dim(\mathcal{H}_A)=2, \dim(\mathcal{H}_B)=3$.

Neven et al. have come up with a family of weaker criteria which are collectively equivalent to PPT and are each more efficiently computed than the PPT criterion itself. The conditions follow from Newton's identities and the next lemma, which is proven with an application of Descartes' rule of sign to the characteristic polynomial of a positive semi-definite matrix written in terms of the elementary symmetric polynomials. 

\begin{lemma}\label{lem:pos-sym}
Let $A$ denote a self-adjoint matrix acting on a Hilbert space of dimension $d$. Then $A$ is positive semi-definite if and only if $e_i(\lambda_1,\ldots,\lambda_d)\ge0$ for all $i=1,\ldots,d$, where $\lambda_1,\ldots,\lambda_d$ are the eigenvalues of $A$, and $e_i$ denotes the $i$-th elementary symmetric polynomial.
\end{lemma}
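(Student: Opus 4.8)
The plan is to prove both implications by relating the signs of the elementary symmetric polynomials $e_i(\lambda_1,\ldots,\lambda_d)$ to the location of the roots of the characteristic polynomial of $A$, exactly as suggested by Vieta's formula \eqref{eq:vieta}. Since $A$ is self-adjoint, all eigenvalues $\lambda_j$ are real, so the characteristic polynomial $\chi_A(t)=\prod_{j=1}^d(t-\lambda_j)$ has only real roots and, by \eqref{eq:vieta}, coefficients $(-1)^k e_k(\lambda_1,\ldots,\lambda_d)$. Positive semi-definiteness of $A$ is precisely the statement that every $\lambda_j\ge 0$, i.e. that $\chi_A$ has no strictly negative roots.

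For the easy direction, I would assume $A$ is positive semi-definite, so every $\lambda_j\ge 0$. Then each monomial $\lambda_{j_1}\cdots\lambda_{j_k}$ appearing in $e_k(\lambda_1,\ldots,\lambda_d)=\sum_{1\le j_1<\cdots<j_k\le d}\lambda_{j_1}\cdots\lambda_{j_k}$ is a product of nonnegative numbers, hence nonnegative, and therefore $e_k\ge 0$ for all $k=1,\ldots,d$. This requires no machinery beyond the definition.

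For the converse, I would argue by contraposition using Descartes' rule of sign. Suppose $A$ is not positive semi-definite; then it has at least one strictly negative eigenvalue. Consider the polynomial $p(t)=\chi_A(-t)=\prod_{j=1}^d(-t-\lambda_j)=(-1)^d\prod_{j=1}^d(t+\lambda_j)$: its positive roots are exactly the negatives of the negative eigenvalues of $A$, so $p$ has at least one positive root. Expanding, the coefficient of $t^{d-k}$ in $p(t)$ is (up to an overall sign $(-1)^d$) simply $e_k(\lambda_1,\ldots,\lambda_d)$, with all signs now aligned rather than alternating. If every $e_i(\lambda_1,\ldots,\lambda_d)\ge 0$, then after factoring out the common sign $(-1)^d$ all coefficients of $p$ are nonnegative, so there are zero sign changes among consecutive coefficients; Descartes' rule of sign then forces $p$ to have no positive roots, a contradiction. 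Hence some $e_i(\lambda_1,\ldots,\lambda_d)<0$, completing the contrapositive.

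The only subtlety—and the main thing to be careful about—is the bookkeeping of signs and the treatment of zero coefficients: one must check that factoring out $(-1)^d$ and discarding any vanishing $e_k$ (which do not create sign changes) genuinely leaves a coefficient sequence with no sign changes, and that the reduction $a_0\ne 0$ in the statement of Descartes' rule of sign (dividing out powers of $t$, corresponding to zero eigenvalues of $A$) does not affect the count of \emph{positive} roots. Once this is handled, the equivalence follows immediately. I expect essentially no other obstacle; the argument is a clean packaging of Vieta's formula together with Descartes' rule of sign as already established in Section~\ref{sec:newton}.
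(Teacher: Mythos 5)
Your proposal is correct and follows essentially the same route as the paper: the forward direction directly from the definition of $e_k$, and the converse by applying Descartes' rule of sign to the reflected characteristic polynomial $\chi_A(-t)$, whose coefficients are (up to a common sign) the $e_k$'s, so zero sign changes force no positive roots and hence nonnegative eigenvalues. Your contrapositive phrasing and the explicit attention to the overall factor $(-1)^d$ and to zero coefficients are only cosmetic differences (and in fact slightly more careful than the paper's own wording).
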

\begin{proof} On the one hand, if $A$ is positive semi-definite, then all of its eigenvalues are nonnegative and it follows that $e_i(\lambda_1,\ldots,\lambda_d)\ge0$ immediately from its definition. Conversely, suppose $e_i(\lambda_1,\ldots,\lambda_d)\ge0$ for all $i=1,\ldots,d$. Let $P(t)=\prod_{i=1}^d(\lambda_i-t)$ be the characteristic polynomial of A and observe that
\begin{equation}P(-t)=\prod_{i=1}^d(\lambda_i+t)=\sum_{i=1}^de_i(\lambda_1,\ldots,\lambda_d)t^{d-i}.
\end{equation}
Now, $A$ is positive semi-definite if and only if $P(t)$ has only positive roots, which is true if and only if $P(-t)$ has only negative roots. Then by Descartes' rule of sign, it follows that $A$ is positive semi-definite.
\end{proof}

For brevity, let us write Newton's identities as
\begin{equation}\label{eq:newton}
ke_k=\sum_{i=1}^k(-1)^{i-1}e_{k-i}p_i,
\end{equation}
where it is understood that the polynomials are evaluated at the same arguments. If $\lambda_1,\ldots,\lambda_d$ are the eigenvalues of a self-adjoint operator $A$, notice that $p_i(\lambda_1,\ldots,\lambda_d)=\tr(A^i)$ is the $i$-th moment of $A$. Letting $A=\rho^\Gamma$ be the partial transpose of some density operator $\rho$ with respect to some subsystem, we invoke Lemma~\ref{lem:pos-sym}, producing inequalities between the moments of the partial transpose whenever it is positive semi-definite. In fact, the satisfaction of every such inequality is equivalent to the partial transpose being positive semi-definite by the lemma. If any such inequality fails, then the state $\rho$ is entangled, as it will fail the PPT test.

Let us take a look at some of these inequalities. To clean up the notation, we will write $p_j=\tr((\rho^\Gamma)^j)$. From \eqref{eq:newton} and the assumption that $e_k\ge0$, we have
\begin{equation}\label{eq:first-ineq}
p_1\ge0
\end{equation}
\begin{equation}
p_2\le p_1^2
\end{equation}
\begin{equation}\label{eq:first-nontrivial}
p_3\ge -\frac12 p_1^3+\frac32 p_1p_2
\end{equation}
\begin{equation}\label{eq:last-ineq}
p_4\le \frac12(p_1^2-p_2)^2-\frac13 p_1^4+\frac43 p_1p_3.
\end{equation}
As noted in \cite{neven2021}, the first inequality is satisfied trivially since $\tr(\rho^\Gamma)=1$ whenever $\rho$ is a density matrix. Likewise, the second inequality is trivially satisfied since $\tr((\rho^\Gamma)^2)=\tr(\rho^2)\le1$, so that the first non-trivial inequality is \eqref{eq:first-nontrivial}, which requires an estimate of only the second and third moments of $\rho^\Gamma$ to test.

\section{Closed Form Derivation}\label{sec:closed-form}
Let us now expand on the work of Neven et al. The generalization of the inequalities \eqref{eq:first-ineq}-\eqref{eq:last-ineq} defined by Newton's identities can be given a closed form and are related to the Bell polynomials, as well as the cycle index polynomials of the symmetric and alternating groups. We will need the following lemma.
\begin{lemma}\label{lem:gen}
    The ordinary generating function for the elementary symmetric polynomials is given by
    \begin{equation}\label{eq:gen}
        \sum_{k=0}^\infty e_kt^k = \exp\left(\sum_{k=1}^\infty\frac{(-1)^{k+1}}{k}p_kt^k\right).
    \end{equation}
\end{lemma}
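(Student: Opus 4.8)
The plan is to recognize \eqref{eq:gen} as the ``integrated'' form of Newton's identities and to obtain it from Lemma~\ref{lemma:newton} by a short formal power series computation; alternatively one can derive it directly from Vieta's formula. I will describe the first route, since it reuses the machinery already set up in Section~\ref{sec:newton}.

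First I would set $E(t)=\sum_{k=0}^\infty e_kt^k$, viewed as a formal power series over the field of fractions of the ring of formal power series used above; note that $E(0)=e_0=1$, so $E(t)$ is a unit and its formal logarithm $\log E(t)$ is well defined. The next step is to observe that the two halves of Lemma~\ref{lemma:newton} combine into the single recursion $ke_k=\sum_{i=1}^k(-1)^{i-1}e_{k-i}p_i$ valid for \emph{every} $k\ge1$: for $k\le n$ this is the first identity, and for $k>n$ one has $e_k=0$ while $e_{k-i}=0$ whenever $i<k-n$, so the sum collapses to the second identity, which vanishes. Packaging this recursion as a generating-function identity gives
\begin{equation}
tE'(t)=\left(\sum_{i=1}^\infty(-1)^{i-1}p_it^i\right)E(t),
\end{equation}
since the coefficient of $t^k$ on the left is $ke_k$ and on the right is $\sum_{i=1}^k(-1)^{i-1}p_ie_{k-i}$.

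Then I would divide by $tE(t)$ to obtain $E'(t)/E(t)=\sum_{i=1}^\infty(-1)^{i-1}p_it^{i-1}$, recognize the left-hand side as the formal derivative of $\log E(t)$, and integrate term by term; the resulting power series has zero constant term, matching $\log E(0)=\log 1=0$. This yields $\log E(t)=\sum_{i=1}^\infty\frac{(-1)^{i-1}}{i}p_it^i$, and exponentiating (using $(-1)^{i-1}=(-1)^{i+1}$) gives \eqref{eq:gen}.

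The main obstacle is really only bookkeeping: one must check carefully that the $k>n$ case of Lemma~\ref{lemma:newton} does assemble into the uniform recursion above, and that the formal manipulations (taking $\log$, dividing by a unit, integrating a power series term by term) are legitimate, which is standard in the formal power series ring once the constant term of $E$ is normalized to $1$. As a sanity check one can compare low-order coefficients against the inequalities \eqref{eq:first-ineq}--\eqref{eq:last-ineq}. For a reader who prefers to avoid any appeal to Newton's identities, an alternative is to start from \eqref{eq:vieta-sub} with $t\mapsto-t$, which gives $E(t)=\prod_{j=1}^n(1+x_jt)$, take the formal logarithm, expand each $\log(1+x_jt)$ via the Mercator series, and interchange the order of summation to recover $\sum_{i\ge1}\frac{(-1)^{i+1}}{i}p_it^i$.
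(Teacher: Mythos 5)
Your argument is correct, and it is not the paper's argument. The paper proves the lemma directly from the product form: substituting into Vieta's formula gives $\sum_{k\ge0}e_kt^k=\prod_{j=1}^n(1+\lambda_jt)$, and then the right-hand side of \eqref{eq:gen} is expanded by writing $p_k=\sum_j\lambda_j^k$, swapping the two sums, and recognizing each inner series as $\log(1+\lambda_jt)$ --- i.e.\ exactly the ``alternative'' route you sketch in your closing sentence. Your main route instead treats \eqref{eq:gen} as the integrated form of Newton's identities: the two cases of Lemma~\ref{lemma:newton} do assemble into the uniform recursion $ke_k=\sum_{i=1}^k(-1)^{i-1}e_{k-i}p_i$ for all $k\ge1$ (since $e_{k-i}=0$ for $i<k-n$ the tail of the sum is precisely the second identity), this is equivalent to $tE'(t)=\bigl(\sum_{i\ge1}(-1)^{i-1}p_it^i\bigr)E(t)$, and the logarithmic-derivative and term-by-term integration steps are legitimate formal manipulations in characteristic zero once one notes $E(0)=e_0=1$ (indeed $E$ is just the polynomial $\prod_j(1+\lambda_jt)$, so invertibility is immediate). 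There is no circularity, because Lemma~\ref{lemma:newton} is proved in Section~\ref{sec:newton} from Vieta's formula without reference to Lemma~\ref{lem:gen}. The trade-off: your derivation reuses the machinery of Section~\ref{sec:newton} and makes the link between the recursion and the exponential generating identity explicit, whereas the paper's direct proof is shorter and keeps Lemma~\ref{lem:gen} logically independent of Newton's identities, which is tidier given that Section~\ref{sec:closed-form} then uses the lemma together with Fa\`a di Bruno's formula to produce the closed form that repackages those identities.
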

\begin{proof}
    From Vieta's formula, we have
    \begin{equation}
        \sum_{k=0}^\infty e_kt^k = \prod_{k=1}^n(1+\lambda_kt),
    \end{equation}
    where it is again understood that $e_k=e_k(\lambda_1,\ldots,\lambda_n)$. Then by expanding the power sum in \eqref{eq:gen}, we have
    \begin{align}
        \exp\left(\sum_{k=1}^\infty\frac{(-1)^{k+1}}{k}p_kt^k\right)&=\exp\left(\sum_{k=1}^\infty\frac{(-1)^{k+1}}{k}\sum_{j=1}^n\lambda_j^{k}t^k\right)\\
        &=\prod_{j=1}^n\exp\left(\sum_{k=1}^\infty\frac{(-1)^{k+1}}{k}\lambda_j^kt^k\right)\\
        &=\prod_{j=1}^n\exp\left(\log(1+\lambda_jt)\right)\\
        &=\prod_{j=1}^n(1+\lambda_jt).
    \end{align}
    This completes the proof.
\end{proof}

The inequalities given by Newton's identities are now given by taking derivatives of the generating function and evaluating at $t=0$. This can be accomplished using Fa\`a di Bruno's formula for an arbitrary derivative of a composition of functions \cite{bruno}. Explicitly, we have the following proposition.

\begin{proposition}
    The elementary symmetric polynomials are given by 
    \begin{equation}\label{eq:elem-sum}
    e_k = (-1)^k\sum_{n_1+2n_2+\cdots+kn_k=k}\prod_{j=1}^k\frac{(-p_j)^{n_j}}{n_j!j^{n_j}},
\end{equation}
where the sum is over all choices of $n_1,\ldots,n_k$ such that $n_1+2n_2+\cdots+kn_k=k$.
\end{proposition}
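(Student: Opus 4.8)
The plan is to read off $e_k$ as a Taylor coefficient of the generating function supplied by Lemma~\ref{lem:gen} and to compute that coefficient with Fa\`a di Bruno's formula. Write $g(t) = \sum_{j\ge1}\frac{(-1)^{j+1}}{j}p_jt^j$, so that Lemma~\ref{lem:gen} becomes the identity of formal power series $\sum_{k\ge0}e_kt^k = \exp(g(t))$. Then $e_k = \frac{1}{k!}\left.\frac{d^k}{dt^k}\exp(g(t))\right|_{t=0}$, and everything reduces to differentiating a composition $k$ times and evaluating at $0$.

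Next I would invoke the combinatorial (multi-index) form of Fa\`a di Bruno's formula applied to $\exp\circ\,g$. Two simplifications make this clean: every derivative of $\exp$ is again $\exp$, and $g(0)=0$, so $\exp(g(0))=1$; moreover $g^{(j)}(0)/j!$ is just the coefficient of $t^j$ in $g$, namely $\frac{(-1)^{j+1}}{j}p_j$. Substituting these into Fa\`a di Bruno and cancelling the $k!$ should yield
\begin{equation*}
e_k = \sum_{n_1+2n_2+\cdots+kn_k=k}\ \prod_{j=1}^k\frac{1}{n_j!}\left(\frac{(-1)^{j+1}}{j}p_j\right)^{n_j},
\end{equation*}
where the sum ranges over exactly the partition-type index set appearing in \eqref{eq:elem-sum}.

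The final step is to reconcile the signs with the claimed form. On any admissible tuple one has $\sum_j jn_j = k$, hence $\prod_j(-1)^{(j+1)n_j} = (-1)^{\sum_j(j+1)n_j} = (-1)^{k+\sum_j n_j}$; factoring out the uniform $(-1)^k$ and absorbing the remaining $(-1)^{n_j}$ into $p_j^{n_j}$ turns the product above into $(-1)^k\prod_j\frac{(-p_j)^{n_j}}{n_j!\,j^{n_j}}$, which is precisely \eqref{eq:elem-sum}. I expect this sign bookkeeping to be the only place requiring genuine care; the rest is a direct substitution. It is also worth remarking that the whole computation is formal, valid in the ring of formal power series (equivalently, in the ring of symmetric functions), so the mismatch between the infinite exponential series and the polynomial $\sum_k e_kt^k$ in finitely many variables causes no difficulty — one simply equates the coefficients of $t^k$ on both sides.
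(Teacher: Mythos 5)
Your proposal is correct and follows essentially the same route as the paper's own proof: extract $e_k$ as the $k$-th Taylor coefficient of the generating function from Lemma~\ref{lem:gen}, apply Fa\`a di Bruno's formula to $\exp\circ\,g$ using $g(0)=0$ and $g^{(j)}(0)=\frac{(-1)^{j+1}(j-1)!\,p_j}{1}$, and then collect signs via $\sum_j jn_j=k$ to produce the $(-1)^k\prod_j(-p_j)^{n_j}$ form. The only difference is presentational — you spell out the sign bookkeeping slightly more explicitly than the paper does.
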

\begin{proof}
    From Lemma~\ref{lem:gen}, we have \eqref{eq:gen}. Taking the $m$-th derivative of the left hand side and evaluating at $t=0$ yields $m!e_m$. Doing the same on the right hand side is accomplished by Fa\`a di Bruno's formula. We have
    \begin{align}
        m!e_m&=\frac{d^m}{dt^m}\bigg\vert_{t=0}\left(\exp\left(\sum_{k=1}^\infty\frac{(-1)^{k+1}}{k}p_kt^k\right)\right)\\
        &=\sum_{n_1+2n_2+\cdots+mn_m=m}\prod_{j=1}^m\frac{m!}{n_j!(j!)^{n_j}}\left(\frac{d^j}{dt^j}\bigg\vert_{t=0}\left(\sum_{k=1}^\infty\frac{(-1)^{k+1}}{k}p_kt^k\right)\right)^{n_j}\\
        &=\sum_{n_1+2n_2+\cdots+mn_m=m}\prod_{j=1}^m\frac{m!}{n_j!(j!)^{n_j}}\left(\frac{(-1)^{j+1}j!}{j}p_j\right)^{n_j}\\
        &=(-1)^m\sum_{n_1+2n_2+\cdots+mn_m=m}\prod_{j=1}^m\frac{m!}{n_j!j^{n_j}}(-p_j)^{n_j},
    \end{align}
    and this completes the proof.
\end{proof}

Remarkably, this quantity is linked to the Bell polynomials \cite{bell1934exponential}, a combinatorial object appearing in the theory of set partitions. Indeed, the $k$-th complete exponential Bell polynomial is defined by $B_k(x_1,\ldots,x_k)$
\begin{equation}
    B_k(x_1,\ldots,x_k)=k!\sum_{n_1+2n_2+\cdots+kn_k=k}\prod_{j=1}^k\frac{x_j^{n_j}}{n_j!j^{n_j}},
\end{equation}
and so it follows that \eqref{eq:elem-sum} can be written
\begin{equation}
    e_k = \frac{(-1)^k}{k!}B_k(-p_1,-1!p_2,\ldots,-(k-1)!p_k).
\end{equation}
 
Moreover, this quantity is further linked to the closely related cycle index polynomial of a finite permutation group $G$ \cite{brualdi2010,tucker1995}, a polynomial in several variables designed to encode information about the structure of permutations in the group. Explicitly, it is defined as 
\begin{equation}
    Z(G)(x_1,\ldots,x_n)=\frac{1}{\lvert G\rvert}\sum_{\sigma\in G}x_1^{j_1(\sigma)}\cdots x_n^{j_n(\sigma)},
\end{equation}
where $j_k(\sigma)$ is the number of cycles of length $k$ in the standard cycle decomposition of $\sigma$ and $\lvert G\rvert$ denotes the order of the group. This combinatorial object is an important piece of P\'olya theory \cite{polya,combinatorics} and has previously appeared in a quantum information setting in \cite{bradshaw2022}, where it was shown that for every finite group there is a test for entanglement with an acceptance probability given by the cycle index polynomial of the group. It subsequently appeared in \cite{bradshaw2023}, where these tests were given a graph-theoretic interpretation by defining a graph zeta function for quantum states. It can be shown that the cycle index polynomial of the symmetric group is $Z(S_n)(x_1,\ldots,x_n)=\frac{1}{n!}B_n(x_1,1!x_2,\ldots,(n-1)!x_n)$, and so we conclude also that
\begin{equation}
    e_k = (-1)^k Z(S_k)(-p_1,\ldots,-p_k).
\end{equation}
Noting once more that $p_j(\lambda_1,\ldots,\lambda_n) = \tr((\rho^\Gamma)^j)$, where the arguments $\lambda_i$ are the eigenvalues of $\rho^\Gamma$, the observations of this section culminate in the following theorem.

\begin{theorem}\label{thm:explicit}
    Let $\rho$ be a density operator. If $\rho$ is not entangled, then
    \begin{equation}\label{eq:closedform}
        (-1)^k Z(S_k)(-\tr[\rho^\Gamma],-\tr[(\rho^\Gamma)^2],\ldots,-\tr[(\rho^\Gamma)^k])\ge0
    \end{equation}
    for some value of $k$. Written explicitly, if $\rho$ is not entangled, then
    \begin{equation}
        (-1)^k\sum_{n_1+2n_2+\cdots+kn_k=k}\prod_{j=1}^k\frac{(-\tr[(\rho^\Gamma)^j])^{n_j}}{n_j!j^{n_j}}\ge0
    \end{equation}
    for some value of $k$.
\end{theorem}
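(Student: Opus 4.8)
The plan is to assemble pieces already in place: the PPT criterion, Lemma~\ref{lem:pos-sym}, and the closed form for $e_k$ from the Proposition together with its cycle-index reformulation. No new idea is required; the theorem is a repackaging of Lemma~\ref{lem:pos-sym} through the generating-function identity of Lemma~\ref{lem:gen}.

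First I would unwind the hypothesis. If $\rho$ is not entangled, then it is separable, so the PPT criterion of Peres and the Horodecki family applies: the partial transpose $\rho^\Gamma$, with respect to whichever subsystem one fixes, is positive semi-definite. Since $\rho$ is a density operator, $\rho^\Gamma$ is self-adjoint, so Lemma~\ref{lem:pos-sym} applies verbatim with $A=\rho^\Gamma$. Writing $\lambda_1,\ldots,\lambda_d$ for the eigenvalues of $\rho^\Gamma$, where $d=\dim(\mathcal{H}_A\otimes\mathcal{H}_B)$, this gives $e_i(\lambda_1,\ldots,\lambda_d)\ge0$ for every $i=1,\ldots,d$ (and trivially $e_i=0$ for $i>d$, so the content of the statement lives in the range $1\le k\le d$).

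Next I would substitute the closed form. The Proposition expresses $e_k=(-1)^k\sum_{n_1+2n_2+\cdots+kn_k=k}\prod_{j=1}^k\frac{(-p_j)^{n_j}}{n_j!\,j^{n_j}}$ with $p_j=p_j(\lambda_1,\ldots,\lambda_d)$, and the remarks following it identify this sum as $(-1)^k Z(S_k)(-p_1,\ldots,-p_k)$. The only quantum-mechanical input is the already-recorded observation that $p_j(\lambda_1,\ldots,\lambda_d)=\tr[(\rho^\Gamma)^j]$ is the $j$-th moment of the partial transpose. Feeding this into $e_k\ge0$ yields exactly \eqref{eq:closedform} and its explicit expansion, for each $k=1,\ldots,d$, which in particular holds for some value of $k$.

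I do not expect a genuine obstacle; the only points deserving a word of care are sign-convention bookkeeping and quantifier placement. For the signs, I would note that the characteristic polynomial used in Lemma~\ref{lem:pos-sym} and the generating function of Lemma~\ref{lem:gen} are consistent, both being built from $\prod_j(1+\lambda_j t)$ up to the substitution $t\leftrightarrow-t$, so the factor $(-1)^k$ in the Proposition is precisely what cancels the sign of the negated moments $-p_j$ and leaves $e_k$ nonnegative. For the quantifiers, I would remark that the statement as given ("for some value of $k$") is the logically weakest useful form—its contrapositive is the entanglement test, namely that violation of \eqref{eq:closedform} at any single $k$ certifies entanglement—and that our argument in fact establishes the inequality simultaneously for all $k$. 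Finally, since the PPT criterion applies to either bipartition, the result holds for $\rho^\Gamma$ taken with respect to any fixed subsystem.
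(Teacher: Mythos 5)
Your proposal is correct and follows essentially the same route the paper intends: the theorem is stated there as the culmination of the PPT criterion, Lemma~\ref{lem:pos-sym}, the Proposition's closed form $e_k=\frac{(-1)^k}{k!}B_k(-p_1,\ldots,-(k-1)!p_k)$, and the identification $Z(S_k)(x_1,\ldots,x_k)=\frac{1}{k!}B_k(x_1,\ldots,(k-1)!x_k)$ with $p_j=\tr[(\rho^\Gamma)^j]$, which is exactly the chain you assemble. Your remark that the argument in fact yields the inequality for all $k=1,\ldots,d$ (making the stated ``for some $k$'' the weaker form) is also consistent with the paper's reading of these inequalities as collectively equivalent to PPT.
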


Note that the cycle index polynomial of the alternating group $A_k$ is
\begin{equation}
    Z(A_k)(x_1,\ldots,x_k)=\sum_{n_1+2n_2+\cdots+kn_k=k}(1+(-1)^{n_2+n_4+\cdots})\prod_{j=1}^k\frac{x_j^{n_j}}{n_j!j^{n_j}},
\end{equation}
and this allows us to write \eqref{eq:closedform} equivalently as
\begin{equation}
    Z(A_k)(\tr[\rho^\Gamma],\ldots,\tr[(\rho^\Gamma)^k])\ge Z(S_k)(\tr[\rho^\Gamma],\ldots,\tr[(\rho^\Gamma)^k]).
\end{equation}
Expanding using the definition of the cycle index polynomial produces
\begin{equation}
    \frac{2}{k!}\sum_{\sigma\in A_k}\prod_{j=1}^k(\tr[(\rho^\Gamma)^j])^{c_j(\sigma)}\ge\frac{1}{k!}\sum_{\sigma\in S_k}\prod_{j=1}^k(\tr[(\rho^\Gamma)^j])^{c_j(\sigma)},
\end{equation}
but the symmetric group contains the alternating group, and so we have
\begin{equation}
    \sum_{\sigma\in A_k}\prod_{j=1}^k(\tr[(\rho^\Gamma)^j])^{c_j(\sigma)}\ge\sum_{\sigma\in S_k\setminus A_k}\prod_{j=1}^k(\tr[(\rho^\Gamma)^j])^{c_j(\sigma)}.
\end{equation}
Thus, we see that if we compute the product $\prod_{j=1}^k(\tr[(\rho^\Gamma)^j])^{c_j(\sigma)}$ for all $\sigma\in S_k$ and separately sum the contributions from even and odd permutations, the inequality is violated precisely when the sum over the odd permutations is larger than the sum over the even permutations.

\section{Implementation on Quantum Computers}\label{sec:implementation}
\sloppy
With Theorem~\ref{thm:explicit} in hand, let us examine these inequalities more closely. Let $f(k)=(-1)^kZ(S_k)(-\tr[\rho^\Gamma],\ldots,-\tr[(\rho^\Gamma)^k])$ so that the inequality is written $f(k)\ge0$. In what follows, we use the notation $\sigma(\ket{i_1}\otimes\cdots\otimes\ket{i_n})=\ket{i_{\sigma^{-1}(1)}}\otimes\cdots\otimes\ket{i_{\sigma^{-1}(n)}}$ to denote the natural action of a permutation group on a tensor product space.

\begin{lemma}\label{lemma:moment-computation}
    Let $\sigma_j = (j\ \ j-1\ \ j-2\ \cdots\ 1)$ be a cyclic permutation of order $j$, and let $\rho$ be a density matrix. Then
    \begin{equation}
        \tr[(\rho^\Gamma)^n]=\tr[(\sigma_n\otimes\sigma_n^{-1})\rho^{\otimes n}].
    \end{equation}
\end{lemma}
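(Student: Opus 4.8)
The plan is to recognize the moment $\tr[(\rho^\Gamma)^n]$ as a "swap-trick" expression, exactly as one does for $\tr[\rho^n]$, but keeping careful track of how the partial transpose redistributes the indices on the $B$-legs. First I would recall the standard identity $\tr[A^n] = \tr[(\sigma_n A^{\otimes n})]$ on $\mathcal{H}^{\otimes n}$, where $\sigma_n$ is the $n$-cycle implementing the cyclic shift $\ket{i_1}\otimes\cdots\otimes\ket{i_n}\mapsto\ket{i_n}\otimes\ket{i_1}\otimes\cdots\otimes\ket{i_{n-1}}$; this is just a repackaging of $\tr[A^n]=\sum A_{i_1 i_2}A_{i_2 i_3}\cdots A_{i_n i_1}$. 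Applying this with $A=\rho^\Gamma$ acting on $(\mathcal{H}_A\otimes\mathcal{H}_B)^{\otimes n}$ gives $\tr[(\rho^\Gamma)^n]=\tr[\sigma_n^{AB}(\rho^\Gamma)^{\otimes n}]$, where $\sigma_n^{AB}$ is the cyclic shift of the $n$ composite tensor factors, i.e.\ $\sigma_n\otimes\sigma_n$ on the $A$-system and $B$-system grouped separately.

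Next I would use the fact that the partial transpose can be moved onto the permutation operator inside a trace: for operators $X,Y$ one has $\tr[X^{T}Y]=\tr[XY^{T}]$, and the partial-transpose version $\tr[X^{\Gamma_B}Y]=\tr[X Y^{\Gamma_B}]$ follows by linearity from the matrix-unit computation $(\ket{k}\!\bra{l})^{T}=\ket{l}\!\bra{k}$. Here the relevant partial transpose is on the $B$-factors of all $n$ copies. Thus $\tr[\sigma_n^{AB}(\rho^\Gamma)^{\otimes n}]=\tr[\sigma_n^{AB}(\rho^{\otimes n})^{\Gamma}]=\tr[(\sigma_n^{AB})^{\Gamma}\rho^{\otimes n}]$, where in the last step the partial transpose of the whole $B$-side has been moved from $\rho^{\otimes n}$ onto $\sigma_n^{AB}$ (using that $(\rho^{\otimes n})^\Gamma$ means transposing all $n$ $B$-legs, matching the $\Gamma$ applied to $\sigma_n^{AB}$). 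It remains to identify $(\sigma_n^{AB})^\Gamma = \sigma_n\otimes(\sigma_n)^{\Gamma_{B\text{-legs}}}$. A permutation operator on $\mathcal{H}_B^{\otimes n}$ is a $0/1$ matrix sending $\ket{i_1}\!\cdots\!\ket{i_n}$ to $\ket{i_{\sigma^{-1}(1)}}\!\cdots\!\ket{i_{\sigma^{-1}(n)}}$; transposing every leg of such an operator turns the permutation $\sigma$ into its inverse $\sigma^{-1}$, since $(\ket{a}\!\bra{b})^T=\ket{b}\!\bra{a}$ interchanges row and column labels factor-by-factor. Hence $(\sigma_n)^{\Gamma_{B}} = \sigma_n^{-1}$ on the $B$-side, giving $\tr[(\rho^\Gamma)^n]=\tr[(\sigma_n\otimes\sigma_n^{-1})\rho^{\otimes n}]$, as claimed.

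The main obstacle is purely bookkeeping: one must be scrupulous about \emph{which} tensor factors the various permutation and transpose operations act on — the $n$ "copy" slots versus the two "subsystem" slots $A,B$ inside each copy — and about the convention for $\sigma(\ket{i_1}\otimes\cdots)$ fixed in the paragraph preceding the lemma (here $\sigma_j=(j\ j{-}1\ \cdots\ 1)$ is chosen precisely so that the shift direction matches the cyclic contraction of indices in $\tr[A^n]$). Once the indices are pinned down, every step is an application of $(\ket{a}\!\bra{b})^T=\ket{b}\!\bra{a}$ together with cyclicity of the trace; I would write the whole argument once at the level of components, $\tr[(\rho^\Gamma)^n]=\sum (\rho^\Gamma)^{i_1 k_1}_{i_2 k_2}(\rho^\Gamma)^{i_2 k_2}_{i_3 k_3}\cdots$, substitute $(\rho^\Gamma)^{ik}_{jl}=\rho^{il}_{jk}$, and read off that the $B$-indices get contracted in the reversed cyclic order, which is exactly the statement $\sigma_n\otimes\sigma_n^{-1}$.
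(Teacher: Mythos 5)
Your proposal is correct, but it proves the lemma by a genuinely different route than the paper. You invoke three structural facts: the cyclic-permutation trace identity $\tr[A^n]=\tr[\sigma_n A^{\otimes n}]$ applied to $A=\rho^\Gamma$, the ability to move a partial transpose from one factor of a trace to the other ($\tr[X^\Gamma Y]=\tr[XY^\Gamma]$, applied with $\Gamma$ transposing all $n$ of the $B$-legs, using $(\rho^\Gamma)^{\otimes n}=(\rho^{\otimes n})^\Gamma$), and the observation that the transpose of a permutation operator in the computational basis is the operator of the inverse permutation, so that partially transposing the $B$-side of the composite cyclic shift $\sigma_n\otimes\sigma_n$ yields $\sigma_n\otimes\sigma_n^{-1}$. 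Each of these steps is sound, and the orientation ambiguity in the first step is harmless since both $n$-cycles give $\tr[A^n]$ when all tensor factors coincide, so you can match the paper's convention for which side carries $\sigma_n$ versus $\sigma_n^{-1}$. The paper instead proves the identity by brute force: it writes $\rho^\Gamma$ in components $\rho^{ik}_{jl}$, multiplies out $(\rho^\Gamma)^n$, computes $\tr[(\rho^\Gamma)^n]$ as an explicit contracted sum, then separately expands $(\sigma_n\otimes\sigma_n^{-1})\rho^{\otimes n}$ and checks that its trace is the same sum --- essentially the component calculation you relegate to a cross-check in your final paragraph. Your operator-algebraic argument is shorter and more transparent about \emph{why} the inverse cycle appears on the $B$-side (transpose of a real permutation operator equals its inverse), and it generalizes readily to other permutations or to distinct operators in the tensor factors; the paper's index computation is more elementary and self-contained, requiring no auxiliary identities, at the cost of heavier bookkeeping.
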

\begin{proof}
    Note that the inverse of $\sigma_j$ is $\sigma_j^{-1}=(1\ 2\ 3\ \cdots\ j)$. If we label the coefficients of $\rho$ by $\rho^{ik}_{jl}$ so that
    \begin{equation}
        \rho = \sum_{i,j,k,l}\rho^{ik}_{jl}\ket{i}\!\!\bra{j}\otimes\ket{k}\!\!\bra{l},
    \end{equation}
    then $\rho^\Gamma$ is defined by
    \begin{equation}
        \rho^\Gamma = \sum_{i,j,k,l}\rho^{ik}_{jl}\ket{i}\!\!\bra{j}\otimes\ket{l}\!\!\bra{k},
    \end{equation}
    and the $n$-th power of $\rho^\Gamma$ is
    \begin{align}
        \sum&\rho^{i_1k_1}_{j_1l_1}\rho^{i_2k_2}_{j_2l_2}\cdots\rho^{i_nk_n}_{j_nl_n}\ket{i_1}\!\!\braket{j_1\vert i_2}\cdots\braket{j_{n-1}\vert i_n}\!\!\bra{j_n}\otimes\ket{l_1}\!\!\braket{k_1\vert l_2}\cdots\braket{k_{n-1}\vert l_n}\!\!\bra{k_n}\\
        &=\sum\rho^{i_1l_2}_{j_1l_1}\rho^{j_1l_3}_{j_2l_2}\cdots\rho^{j_{n-1}k_n}_{j_nl_n}\ket{i_1}\!\!\bra{j_n}\otimes\ket{l_1}\!\!\bra{k_n}.
    \end{align}
    Thus, it follows that 
    \begin{equation}
        \tr[(\rho^\Gamma)^n]=\sum\rho^{j_nl_2}_{j_1l_1}\rho^{j_1l_3}_{j_2l_2}\cdots\rho^{j_{n-1}l_1}_{j_nl_n}.
    \end{equation}
    On the other hand, $\rho^{\otimes n}$ is given by
    \begin{equation}
        \rho^{\otimes n}=\sum\rho^{i_1k_1}_{j_1l_1}\rho^{i_2k_2}_{j_2l_2}\cdots\rho^{i_nk_n}_{j_nl_n}\ket{i_1}\!\!\bra{j_1}\otimes\ket{k_1}\!\!\bra{l_1}\otimes\cdots\otimes\ket{i_n}\!\!\bra{j_n}\otimes\ket{k_n}\!\!\bra{l_n},
    \end{equation}
    and so $(\sigma_n\otimes\sigma_n^{-1})\rho^{\otimes n}$ is
    \begin{equation}
        (\sigma_n\otimes\sigma_n^{-1})\rho^{\otimes n}=\sum\rho^{i_1k_1}_{j_1l_1}\rho^{i_2k_2}_{j_2l_2}\cdots\rho^{i_nk_n}_{j_nl_n}\ket{i_2}\!\!\bra{j_1}\otimes\ket{k_n}\!\!\bra{l_1}\otimes\cdots\otimes\ket{i_1}\!\!\bra{j_n}\otimes\ket{k_{n-1}}\!\!\bra{l_n},
    \end{equation}
    and it follows that
    \begin{equation}
        \tr[(\sigma_n\otimes\sigma_n^{-1})\rho^{\otimes n}]=\sum\rho^{j_nl_2}_{j_1l_1}\rho^{j_1l_3}_{j_2l_2}\cdots\rho^{j_{n-1}l_1}_{j_nl_n}.
    \end{equation}
    Thus, we have $\tr[(\rho^\Gamma)^n]=\tr[(\sigma_n\otimes\sigma_n^{-1})\rho^{\otimes n}]$.

\end{proof}

Since $\rho^\Gamma$ may not be a density matrix itself, we cannot expect to use it as the input to some quantum algorithm which computes its moments. Fortunately, Lemma~\ref{lemma:moment-computation} tells us that the moments can be computed instead as the trace of a permutation operator applied to a tensor power of $\rho$. This equivalence allows us to construct the circuit shown in Figure~\ref{fig:momentestimation} for estimating the moments of $\rho^\Gamma$ as outlined in \cite{carteret2005,leifer2004}. Here, the $\rho^{\otimes k}$ input state is more easily understood in the ensemble of pure states picture. Indeed, the density matrix is a mathematical tool to express an ensemble $\{(p_1,\ket{\psi_1}\ket{\phi_1}),\ldots,(p_m,\ket{\psi_m}\ket{\phi_m})\}$ of pairs $(p_i,\ket{\psi_i}\ket{\phi_i})$ with $p_1+\cdots+p_m=1$, which indicates that the state of our system is $\ket{\psi_i}\ket{\phi_i}$ with probability $p_i$. This probabilistic picture reflects some classical ignorance about which quantum state the system is in, as opposed to the non-determinism induced by the Born rule, a quantum mechanical effect. Thus, the states being passed through the circuit are probabilistically determined by the ensemble; the state is $\ket{\psi_{i_1}}\ket{\phi_{i_1}}\cdots\ket{\psi_{i_k}}\ket{\phi_{i_k}}$ with probability $p_{i_1}\cdots p_{i_k}$.

\begin{figure}
    \centering
    \begin{quantikz}
        \lstick{$\ket{0}$} & \gate[1]{H} & \ctrl{1}           &                  \gate[1]{H} &\meter{}\\
        \lstick[2]{$\rho$} &             & \gate[5]{\sigma_k\otimes\sigma_k^{-1}} &  &             \\
                           &             &                    &                         &             \\
        \lstick{\vdots\ \ }\setwiretype{n}    &             &                    &                                      &\\
        \lstick[2]{$\rho$} &             &                    &                         &             \\
                           &             &                    &                         &             
    \end{quantikz}
    \caption{
        Circuit for estimating $\tr[(\rho^\Gamma)^k]$.
    }\label{fig:momentestimation}
\end{figure}
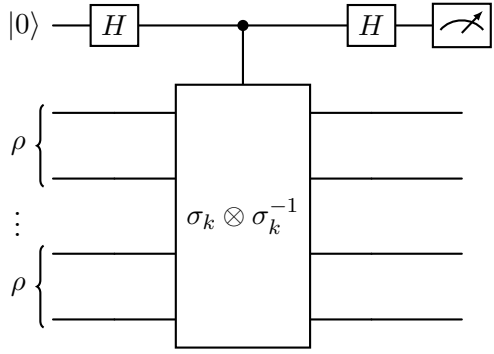

Given that the state of the system is $\ket{\psi_{i_1}}\ket{\phi_{i_1}}\cdots\ket{\psi_{i_k}}\ket{\phi_{i_k}}$, we can now perform an analysis of the circuit as usual. The state just before measurement is
\begin{equation}
    \ket{0}\frac{1+\sigma_k\otimes\sigma_k^{-1}}{2}\ket{\psi_{i_1}}\ket{\phi_{i_1}}\cdots\ket{\psi_{i_k}}\ket{\phi_{i_k}}+\ket{1}\frac{1-\sigma_k\otimes\sigma_k^{-1}}{2}\ket{\psi_{i_1}}\ket{\phi_{i_1}}\cdots\ket{\psi_{i_k}}\ket{\phi_{i_k}},
\end{equation}
and the probability of measuring $\ket{0}$ is therefore
\begin{align}
    P(\ket{0})&=\bra{\phi_{i_k}}\bra{\psi_{i_k}}\cdots\bra{\phi_{i_1}}\bra{\psi_{i_1}}\left(\frac{1+\sigma_k\otimes\sigma_k^{-1}}{2}\right)^\dagger\left(\frac{1+\sigma_k\otimes\sigma_k^{-1}}{2}\right)\ket{\psi_{i_1}}\ket{\phi_{i_1}}\cdots\ket{\psi_{i_k}}\ket{\phi_{i_k}}\\
    &=\bra{\phi_{i_k}}\bra{\psi_{i_k}}\cdots\bra{\phi_{i_1}}\bra{\psi_{i_1}}\left(\frac{1+\sigma_k^{-1}\otimes\sigma_k}{2}\right)\left(\frac{1+\sigma_k\otimes\sigma_k^{-1}}{2}\right)\ket{\psi_{i_1}}\ket{\phi_{i_1}}\cdots\ket{\psi_{i_k}}\ket{\phi_{i_k}}\\
    &=\bra{\phi_{i_k}}\bra{\psi_{i_k}}\cdots\bra{\phi_{i_1}}\bra{\psi_{i_1}}\frac{2+\sigma_k^{-1}\otimes\sigma_k+\sigma_k\otimes\sigma_k^{-1}}{4}\ket{\psi_{i_1}}\ket{\phi_{i_1}}\cdots\ket{\psi_{i_k}}\ket{\phi_{i_k}}.
\end{align}
Similarly, the probability of measuring $\ket{1}$ is
\begin{equation}
    P(\ket{1})=\bra{\phi_{i_k}}\bra{\psi_{i_k}}\cdots\bra{\phi_{i_1}}\bra{\psi_{i_1}}\frac{2-\sigma_k^{-1}\otimes\sigma_k-\sigma_k\otimes\sigma_k^{-1}}{4}\ket{\psi_{i_1}}\ket{\phi_{i_1}}\cdots\ket{\psi_{i_k}}\ket{\phi_{i_k}},
\end{equation}
and so the expected value of the circuit for this choice of input is
\begin{align}
    \langle Z\rangle_{i_1,\ldots,i_k} &= P(\ket{0})-P(\ket{1})\\
    &=\bra{\phi_{i_k}}\bra{\psi_{i_k}}\cdots\bra{\phi_{i_1}}\bra{\psi_{i_1}}\frac{\sigma_k^{-1}\otimes\sigma_k+\sigma_k\otimes\sigma_k^{-1}}{2}\ket{\psi_{i_1}}\ket{\phi_{i_1}}\cdots\ket{\psi_{i_k}}\ket{\phi_{i_k}}\\
    &=\re\left(\bra{\phi_{i_k}}\bra{\psi_{i_k}}\cdots\bra{\phi_{i_1}}\bra{\psi_{i_1}}\sigma_k\otimes\sigma_k^{-1}\ket{\psi_{i_1}}\ket{\phi_{i_1}}\cdots\ket{\psi_{i_k}}\ket{\phi_{i_k}}\right)\\
    &=\re\left(\tr[\sigma_k\otimes\sigma_k^{-1}\ket{\psi_{i_1}}\!\!\bra{\psi_{i_1}}\otimes\ket{\phi_{i_1}}\!\!\bra{\phi_{i_1}}\otimes\cdots\otimes\ket{\psi_{i_k}}\!\!\bra{\psi_{i_k}}\otimes\ket{\phi_{i_k}}\!\!\bra{\phi_{i_k}}]\right).
\end{align}
Thus, if we prepare the same mixed state for each run of the circuit, we see that $\ket{\psi_{i_1}}\ket{\phi_{i_1}}\cdots\ket{\psi_{i_k}}\ket{\phi_{i_k}}$ occurs with probability $p_{i_1}\cdots p_{i_k}$ and so the expected value for the circuit with $\rho^{\otimes k}$ as input is
\begin{align}
    \langle Z\rangle_\rho &= \sum_{i_1,\ldots,i_k=1}^mp_{i_1}\cdots p_{i_k}\langle Z\rangle_{i_1,\ldots,i_k}\\
    &=\sum_{i_1,\ldots,i_k=1}^mp_{i_1}\cdots p_{i_k}\re\left(\tr[\sigma_k\otimes\sigma_k^{-1}\ket{\psi_{i_1}}\!\!\bra{\psi_{i_1}}\otimes\ket{\phi_{i_1}}\!\!\bra{\phi_{i_1}}\otimes\cdots\otimes\ket{\psi_{i_k}}\!\!\bra{\psi_{i_k}}\otimes\ket{\phi_{i_k}}\!\!\bra{\phi_{i_k}}]\right)\\
    &=\re\left(\tr\left[\sigma_k\otimes\sigma_k^{-1}\sum_{i_1,\ldots,i_k=1}^mp_{i_1}\cdots p_{i_k}\ket{\psi_{i_1}}\!\!\bra{\psi_{i_1}}\otimes\ket{\phi_{i_1}}\!\!\bra{\phi_{i_1}}\otimes\cdots\otimes\ket{\psi_{i_k}}\!\!\bra{\psi_{i_k}}\otimes\ket{\phi_{i_k}}\!\!\bra{\phi_{i_k}}\right]\right)\\
    &=\re\left(\tr\left[\sigma_k\otimes\sigma_k^{-1}\rho^{\otimes k}\right]\right).
\end{align}
Since $\tr[\sigma_k\otimes\sigma_k^{-1}\rho^{\otimes k}]=\tr[(\rho^\Gamma)^k]$ and $\rho^{\Gamma}$ is self-adjoint, the moments must be real, and so the expected value of the circuit is 
\begin{equation}
    \langle Z\rangle_\rho = \tr[\sigma_k\otimes\sigma_k^{-1}\rho^{\otimes k}].
\end{equation}

\begin{figure}
\begin{center}
\begin{subfigure}[t]{0.45\textwidth}
         \centering
         \includegraphics[width=\textwidth]{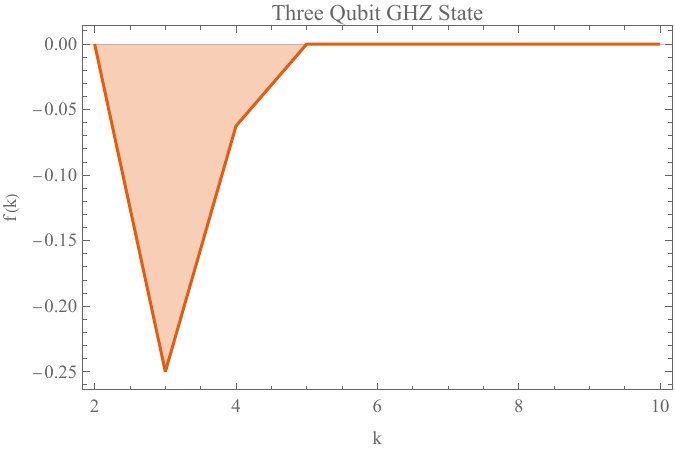}
         \caption{}
         \label{fig:GHZ}
\end{subfigure}
\begin{subfigure}[t]{0.45\textwidth}
         \centering
         \includegraphics[width=\textwidth]{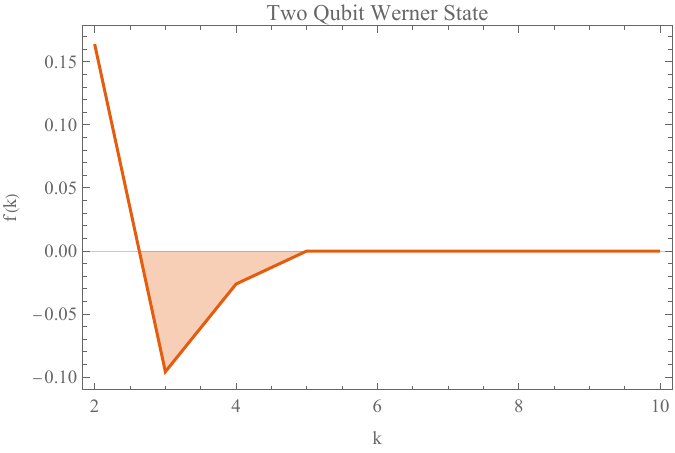}
         \caption{}
         \label{fig:Werner}
\end{subfigure}
\end{center}
\caption{A calculation of Eq.~\eqref{eq:closedform} for a three qubit GHZ state (a) and a two qubit Werner state (b) with $p =0.75$.}
\label{fig:trialstates}
\end{figure}

The value of $f(k)$ can be estimated by measuring each of the moments of $\rho^\Gamma$ up to its rank using the circuit in Figure~\ref{fig:momentestimation}. In Figure~\ref{fig:GHZ}, we show the calculation of $f(k)$ for the GHZ state $\frac{1}{\sqrt{2}}(\ket{000}+\ket{111})$, which is known to be maximally entangled. In Figure~\ref{fig:Werner}, we plot $f(k)$ for the Werner state $\rho_w (p) = p \ket{\Psi^-}\!\bra{\Psi^-} + (1-p)\mathbb{I}$ with $p=0.75$, a regime where this state is entangled. In both cases, the value of $f(k)$ is clearly less than zero for multiple values of $k$.

\begin{figure}
\begin{center}
\includegraphics[width=0.5\columnwidth]{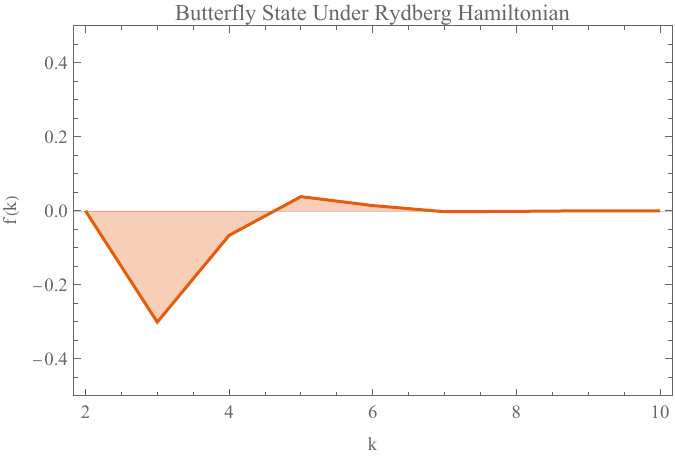}
\end{center}
\caption{A calculation of Eq.~\eqref{eq:closedform} for a ``butterfly metrology state" as given in \cite{kobrin2024universal} obtained from evolution under the Rydberg XY Hamiltonian. }
\label{fig:butterfly}
\end{figure}

Note that $f(k)$ being negative for some $1 < k \leq d$, where $d$ is the rank of the density matrix, does not imply that $f(k+1)$ is also negative. Thus, it does not suffice to simply check $f(d)$ to ascertain whether $\rho$ satisfied the PPT criterion. In Figure~\ref{fig:butterfly}, we give an example where $f(k) < 0$ for $2< k < 5$, but positive again for $k=5$.  The state used in this example is modeled on a four qubit version of the butterfly metrology state from \cite{kobrin2024universal}, which has the form
\begin{equation}
    \ket{\psi (t)} = e^{\frac{i H t}{\hbar}} e^{\frac{i V \pi}{4}} e^{\frac{-i H t}{\hbar}} \ket{\psi(0)}\, .
\end{equation}
For this particular example, we chose $H$ to be the Hamiltonian governing XY interactions in Rydberg atoms in a lattice
\begin{equation}
    H = - J \sum_{i<j}\frac{a^3}{r_{ij}^3}(X_iX_j + Y_iY_j) \, ,
\end{equation} where $J$ and $a$ correspond to the dipole strength and lattice spacing respectively. The initial state was $\ket{\psi(0)}=\ket{1011}$, and the local operator is given by $V = Y_1Y_2$. This example demonstrates the necessity of computing all $d$ inequalities before concluding that $\rho$ has PPT. This aligns with previous work stating that the number of moments estimated should equal the rank of the density matrix \cite{shin2025trace}.

\section{Graph Theoretic Equivalent Conditions}\label{sec:graphs}

A density matrix $\rho$ induces a weighted graph by letting the graph's adjacency matrix be $\rho$ itself, in the sense that the elements $\rho_{ij}$ specify the weights of the edges between vertices $i$ and $j$.
In \cite{bradshaw2023}, it was shown that the cycle index polynomial of the symmetric group evaluated at the moments of $\rho$ appears as the coefficients in the exponential expansion of the graph zeta function $\zeta_\rho$ defined by
\begin{equation}
    \zeta_\rho(u)=\prod_{[P]}(1-N_E(P)u^{\nu(P)})^{-1},
\end{equation}
where the product is over all equivalence classes of prime paths in the induced graph, $\nu(P)$ is the length of the path $P$, and $N_E(P)$ is the product of the weights of the edges in the path $P$. A prime in the weighted graph is a path specified by a sequence $e_1\dots e_k$ of edges which is closed, backtrackless, tailless, and not the power of another path i.e. the origin vertex of $e_1$ is the terminal vertex of $e_k$, we have $e_k\ne e_1^{-1}$, $e_{j+1}\ne e_j^{-1}$ for all $j=1,\ldots, k-1$, and there is no $m$ such that $e_1\cdots e_k=(e_1\cdots e_m)^{k/m}$. The equivalence classes of primes are obtained by identifying cyclic shifts of primes so that $[e_1\cdots e_k]=[e_ke_1\cdots e_{k-1}]$, for example.

Explicitly, it was shown that
\begin{equation}\label{eq:exponential-zeta}
    \zeta_\rho(u)=\exp\left(\sum_{k=1}^\infty\frac{\tr[\rho^k]}{k}u^k\right)=\sum_{k=0}^\infty \frac{Z(S_k)(\tr[\rho],\ldots,\tr[\rho^k])}{k!}u^k,
\end{equation}
and these findings held more generally for any weighted graph with adjacency matrix $\rho$, not just those induced by density matrices.
Thus, we may replace $\rho$ by $\rho^\Gamma$ and manipulate \eqref{eq:exponential-zeta} to obtain
\begin{equation}
    \sum_{k=0}^\infty (-1)^k\frac{Z(S_k)(-\tr[\rho^\Gamma],\ldots,-\tr[(\rho^\Gamma)^k])}{k!}u^k=\exp\left(-\sum_{k=1}^\infty(-1)^k\frac{\tr[(\rho^\Gamma)^k]}{k}u^k\right)=(\zeta_{\rho^\Gamma}(-u))^{-1},
\end{equation}
so that
\begin{equation}
    (\zeta_{\rho^\Gamma}(-u))^{-1} = \sum_{k=0}^\infty \frac{f(k)}{k!}u^k.
\end{equation}
Just as $\zeta_\rho(u)$ is the generating function for the acceptance probabilities of the entanglement tests appearing in \cite{bradshaw2022}, $(\zeta_{\rho^\Gamma}(-u))^{-1}$ is the generating function for the sequence $f(k)$ defining the moment inequalities
which are collectively equivalent to the PPT criterion. Thus, we have proved the following theorem, producing a graph theoretic perspective for the PPT criterion.

\begin{theorem}
    Let $\rho^\Gamma$ be a partial transpose for a $d\times d$ density matrix $\rho$. If $\rho$ is separable, then
    \begin{equation}\label{eq:graph-condition}
        \frac{1}{k!}\frac{d^k}{du^k}\bigg\vert_{u=0}\left(\prod_{[P]}\left(1-(-1)^{\nu(P)}N_E(P)u^{\nu(P)}\right)\right)\ge0
    \end{equation}
    for all $k=1,\ldots,d$. Moreover, \eqref{eq:graph-condition} holds for all $k=1,\ldots,d$ if and only if $\rho$ satisfies the PPT criterion.
\end{theorem}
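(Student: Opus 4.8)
The plan is to recognize the left-hand side of \eqref{eq:graph-condition} as the quantity $f(k)$ already studied in Section~\ref{sec:implementation}, after which the statement collapses to Lemma~\ref{lem:pos-sym} together with the Peres--Horodecki theorem. First I would substitute $\rho\mapsto\rho^\Gamma$ and $u\mapsto-u$ into the definition $\zeta_\rho(u)=\prod_{[P]}(1-N_E(P)u^{\nu(P)})^{-1}$, using $N_E(P)(-u)^{\nu(P)}=(-1)^{\nu(P)}N_E(P)u^{\nu(P)}$, so that the product appearing in \eqref{eq:graph-condition} is exactly $(\zeta_{\rho^\Gamma}(-u))^{-1}$. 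Since $\rho^\Gamma$ is a finite $d\times d$ matrix, $\zeta_{\rho^\Gamma}(u)=\det(I-u\rho^\Gamma)^{-1}$ is the reciprocal of a degree-$d$ polynomial, so the formal power series manipulations are legitimate and the product is genuinely a polynomial in $u$ of degree $d$, not merely an asymptotic series.

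Next I would invoke the identity $(\zeta_{\rho^\Gamma}(-u))^{-1}=\sum_{k=0}^\infty\frac{f(k)}{k!}u^k$ established in this section, which identifies $\frac{1}{k!}\frac{d^k}{du^k}\big\vert_{u=0}$ of the product with $\frac{f(k)}{k!}$. Hence \eqref{eq:graph-condition} is equivalent to $f(k)\ge0$. By Theorem~\ref{thm:explicit} and the closed form preceding it, $f(k)=(-1)^kZ(S_k)(-p_1,\ldots,-p_k)=e_k(\lambda_1,\ldots,\lambda_d)$, where $p_j=\tr[(\rho^\Gamma)^j]$ and $\lambda_1,\ldots,\lambda_d$ are the eigenvalues of $\rho^\Gamma$. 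Thus \eqref{eq:graph-condition} holding for $k=1,\ldots,d$ is exactly the condition $e_k(\lambda_1,\ldots,\lambda_d)\ge0$ for $k=1,\ldots,d$.

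The equivalence then follows directly: by Lemma~\ref{lem:pos-sym}, $e_k(\lambda_1,\ldots,\lambda_d)\ge0$ for all $k=1,\ldots,d$ if and only if $\rho^\Gamma$ is positive semi-definite, i.e.\ if and only if $\rho$ satisfies the PPT criterion for the partial transpose under consideration. For the first assertion, if $\rho$ is separable then $\rho^\Gamma\succeq0$ by the Peres--Horodecki theorem, so \eqref{eq:graph-condition} holds for all $k=1,\ldots,d$. I do not anticipate a substantive obstacle, since the theorem is essentially a translation of Theorem~\ref{thm:explicit} into the graph zeta function language; the only points needing care are (i) confirming via \eqref{eq:exponential-zeta} and its derivation in \cite{bradshaw2023} that the infinite product over prime-path equivalence classes really equals the finite polynomial $(\zeta_{\rho^\Gamma}(-u))^{-1}$, and (ii) being explicit that ``satisfies the PPT criterion'' is here the statement $\rho^\Gamma\succeq0$ for the specific $\Gamma$ chosen, so the ``if and only if'' is understood with respect to that partial transpose.
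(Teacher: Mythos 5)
Your proposal is correct and follows essentially the same route as the paper: the paper likewise identifies the product over prime classes with $(\zeta_{\rho^\Gamma}(-u))^{-1}$ via the substitution of $\rho^\Gamma$ and $-u$ into \eqref{eq:exponential-zeta}, reads off the Taylor coefficients as $f(k)/k!$, and then appeals to the fact that $f(k)=e_k(\lambda_1,\ldots,\lambda_d)\ge0$ for $k=1,\ldots,d$ is equivalent to $\rho^\Gamma\succeq0$ (Lemma~\ref{lem:pos-sym}) together with the Peres--Horodecki implication for separable states. Your added remarks that $\zeta_{\rho^\Gamma}(u)=\det(I-u\rho^\Gamma)^{-1}$ makes the product a genuine degree-$d$ polynomial, and that the PPT statement is relative to the chosen cut, are harmless clarifications rather than a different argument.
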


Let us examine the first several such inequalities. For $k=1$, we have
\begin{equation}\label{eq:k1inequal}
    \sum_{\substack{[P]\\\nu(P)=1}}N_E(P)\ge0.
\end{equation}
Of course, the partial transpose operation does not affect the diagonal of our density matrix, so the weights of the primes of length 1 are unchanged by the transformation $\rho\mapsto\rho^\Gamma$. Thus, the sum in \eqref{eq:k1inequal} is unchanged, and moreover, the sum of the weights of the primes of length 1 is the trace of $\rho$, showing that \eqref{eq:k1inequal} is always trivially satisfied. Turning now to $k=2$, we have
\begin{equation}
    \sum_{\substack{[P]\\\nu(P)=2}}N_E(P)\le\left(\sum_{\substack{[P]\\\nu(P)=1}}N_E(P)\right)^2.
\end{equation}
It was just shown in the $k=1$ case that the right hand side is 1. Thus, the inequality reduces to 
\begin{equation}\label{eq:k2inequal}
    \sum_{\substack{[P]\\\nu(P)=2}}N_E(P)\le1,
\end{equation}
which is trivially satisfied by a diagonal adjacency matrix. Since $\rho^\Gamma$ is self-adjoint, it can always be diagonalized, and so \eqref{eq:k2inequal} too holds trivially. The $k=3$ inequality is more interesting. We have
\begin{align}
    \sum_{\substack{[P]\\\nu(P)=3}}N_E(P)&\ge\frac{1}{6}\left[5\sum_{\substack{[P]\\\nu(P)=1}}N_E(P)\sum_{\substack{[P]\\\nu(P)=2}}N_E(P)-\left(\sum_{\substack{[P]\\\nu(P)=1}}N_E(P)\right)^3\right]\\
    &=\frac{1}{6}\left[5\sum_{\substack{[P]\\\nu(P)=2}}N_E(P)-1\right]=\frac{1}{24}.
\end{align}
Take, for example, the Bell state $\frac{1}{\sqrt{2}}(\ket{00}+\ket{11})$, which has density matrix
\begin{equation}
    \rho = \begin{pmatrix}
        1/2 & 0 & 0 & 1/2\\
        0   & 0 & 0 & 0\\
        0   & 0 & 0 & 0\\
        1/2 & 0 & 0 & 1/2
    \end{pmatrix}.
\end{equation}
\begin{figure}
\begin{center}
\includegraphics[width=0.5\columnwidth]{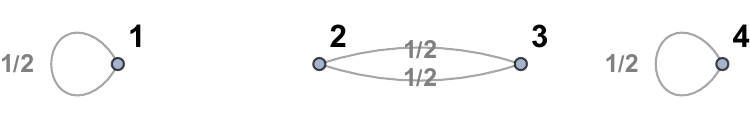}
\end{center}
\caption{Graph induced by the partial transpose of the Bell state $\frac{1}{\sqrt{2}}(\ket{00}+\ket{11})$.}
\label{fig:bell}
\end{figure}The partial transpose with respect to the second subsystem is given by transposing the four $2\times 2$ blocks. Thus, we have
\begin{equation}
    \rho = \begin{pmatrix}
        1/2 & 0 & 0 & 0\\
        0   & 0 & 1/2 & 0\\
        0   & 1/2 & 0 & 0\\
        0 & 0 & 0 & 1/2
    \end{pmatrix}.
\end{equation}
The induced graph is shown in Figure~\ref{fig:bell} and consists of four vertices labeled $1,2,3,4$, and four edges, an edge $e_1$ from vertex 1 to itself, an edge $e_{23}$ from vertex 2 to vertex 3, an edge $e_{32}$ from vertex 3 to vertex 2, and an edge $e_4$ from vertex 4 to itself. The weight of every such edge is $1/2$. The only equivalence class of prime paths with length $\nu(P)=2$ is given by the representative $P=e_{23}e_{32}$ which has edge norm $N_E(P)=\frac{1}{4}$. Moreover, there are no equivalence classes of prime paths with length 3. Thus, 
\begin{equation}
    \sum_{\substack{[P]\\\nu(P)=3}}N_E(P)=0<\frac{1}{24}=\frac{1}{6}\left[5\sum_{\substack{[P]\\\nu(P)=2}}N_E(P)-1\right],
\end{equation}
and it follows that the Bell state is entangled.

\section{Conclusion}\label{sec:conclusion}

In this work, we have examined the relationships between the moments of the partial transpose of a separable state that arise from Newton's identities and Descartes' rule of sign. We have augmented the original approach of Neven et al. \cite{neven2021} with a closed form expression for the associated moment-based entanglement tests. Explicitly, we have given a closed form for the relationship between a state $\rho$ having PPT and the elementary symmetric polynomials, showing that these polynomials specify a series of inequalities that must be obeyed to satisfy the PPT criterion. Finally, as a point of interest, we share an equivalent graph-theoretic interpretation of these inequalities. 

While this derivation is mathematically interesting, the relationship between entanglement witnesses such as the PPT criterion and symmetric polynomials is perhaps more so. Previously, cycle index polynomials of various groups were shown to correspond to the acceptance probability of purity tests, which can also act as entanglement witnesses. A potential avenue for future work is determining if further linear entanglement witnesses give rise to similar moment-based entanglement tests which can be written in terms of symmetric polynomials in their closed forms. However, analytic forms of these expression are not always available and can additionally be an area of future research.

\section*{Acknowledgments}

The authors acknowledge support from the Office of the Under Secretary of Defense for Research and Engineering (OUSD(R\&E)) SMART SEED award. 

\section*{Competing Interests} The authors declare no non-financial competing interests. This research was funded by the Office of the Under Secretary of Defense for Research and Engineering (OUSD(R\&E)) SMART SEED award.

\section*{Data Availability Statement} There is no supporting data to accompany this work.
\bibliographystyle{unsrt}
\bibliography{ref}

\end{document}